\newtheorem{thm}{Theorem}[section]
\theoremstyle{definition}
\newtheorem{fact}[thm]{Fact}
\theoremstyle{remark}
\numberwithin{equation}{section}
\newcommand{\eps}{\varepsilon}
\newcommand{\yl}{{\check{y}}}
\newcommand{\yh}{{\hat{y}}}
\newcommand{\bl}{{\check{b}}}
\newcommand{\bh}{{\hat{b}}}
\def\id{\operatorname{Id}}
\def\supp{\operatorname{supp}}
\def\R{\mathbb R}
\def\eps{\varepsilon}
\def\polylog{\operatorname{polylog}}
\title{Almost Optimal Unrestricted Fast Johnson-Lindenstrauss Transform}
\author{Nir Ailon}
\address{Technion, Haifa, Israel}
\email{nailon@gmail.com}
\thanks{Nir Ailon's affiliation: Technion, Haifa, Israel, nailon@gmail.com}
\author{Edo Liberty}
\address{Yahoo! Research, Haifa, Israel}
\email{edo@yahoo-inc.com}
\thanks{Edo Liberty's affiliation: Yahoo! Research, Haifa, Israel edo@yahoo-inc.com}
\date{May 30, 2010}
\begin{document}

\maketitle

\begin{abstract}
The problems of random projections and sparse reconstruction
have much in common and individually received much attention.
Surprisingly, until now they progressed in parallel and remained mostly separate.
Here, we employ new tools from probability in Banach spaces that were successfully used in the context of
sparse reconstruction to advance on an open problem in random pojection.
In particular, we generalize and use an intricate result by Rudelson and Vershynin for sparse reconstruction which uses
Dudley's theorem for bounding Gaussian processes.
Our main result states that any set of $N = \exp(\tilde{O}(n))$ real vectors
in $n$ dimensional space can be linearly mapped to a space of dimension $k=O(\log N\polylog(n))$, while
(1) preserving the pairwise distances among the vectors to within any constant distortion and
(2) being able to apply the transformation in time $O(n\log n)$ on each vector.
This improves on the best known $N = \exp(\tilde{O}(n^{1/2}))$
achieved by Ailon and Liberty and  $N = \exp(\tilde{O}(n^{1/3}))$ by Ailon and Chazelle.
 The dependence in the distortion constant however is believed to be suboptimal and subject to further investigation.
For constant distortion, this settles the open question posed by these authors up to a $\polylog(n)$ factor
while considerably simplifying their constructions.

\end{abstract}

\section{Introduction}
Designing computationally efficient transformations that reduce dimensionality of data while approximately
preserving its metric information lies at the heart of many problems.
While in compressed sensing such techniques are sought for sparse data in a real or complex metric space (with
respect to some basis), in random projections, following the seminal work of Johnson and Lindenstrauss, one seeks to reduce dimension of any
set of finite data.\footnote{The term "random projections" describes  Johnson and Lindenstrauss'
s original construction
and became synonymous with the process of approximate metric preserving dimension reduction using randomized linear mappings.
However, these linear mappings need not be (and indeed are usually not) projections in the linear algebraic sense of the word.}
In both applications, random matrices of a suitable size
\cite{JL84}\cite{FranklM87}\cite{DasGuptaGupta99}\cite{Achlioptas03} result in optimal
construction \cite{Alon03} in the parameters $n$ (the original dimension), $k$ (the target dimension), $N$ (the number
of input vectors) and $\delta$ (the distortion).
However, these constructions' resulting running time complexity, measured as number of operations
needed in order to  map a vector, is suboptimal.

A major open question is that of designing such matrix distributions that can be applied efficiently to any vector, with optimal
dependence in the parameters $n,k,N$ and $\delta$.  Applications for such transformations were found e.g. in designing fast
approximation algorithms for solving large scale linear algebraic operations (e.g. \cite{Sarlos06}, \cite{Woolfe08})
The two lines of work, though sharing much in common, have mostly progressed in parallel.  Here we combine recent work
on bounds for sparse reconstruction to improve bounds of Ailon and Chazelle \cite{AilonCh06,AilonC10} and Ailon and Liberty Liberty \cite{AilonLi08} on
fast random projections, also known as Fast Johnson-Lindenstrauss transformations.
The new bounds allow obtaining the well known Fast Johnson-Lindenstrauss Transform for  finite sets of bounded cardinality
$N = \exp(\tilde{O}(n))$ where $n$ is the original dimension.
 The best known so far was obtained by Ailon and Liberty for sets of size up to $N=\exp\{\tilde{O}(n^{1/2})\}$.\footnote{The
 notation $\tilde{O}(\cdot)$ suppresses arbitrarily small polynomial coefficients and polylogarithmic factors.}
The latter improved on Ailon and Chazelle's original bound of $N=\exp\{O(n^{1/3})\}$,
which initiated the construction of Fast Johnson-Lindenstrauss Transforms.
We also mention Dasgupta et al.'s work \cite{DasguptaKS10} on construction of Johnson-Linenstrauss random matrices which can be more efficiently
applied to sparse vectors, with applications in the streaming model, and Ailon et al's work \cite{DBLP:conf/approx/LibertyAS08} on design of Johnson-Lindenstrauss matrices
that run in linear time under certain assumptions on various norms of the input vectors.

The transformation we derive here is a composition of two random matrices:  A random sign matrix and a random
selection of a suitable number $k$ of rows from a Fourier matrix, where $k=O(\delta^{-4}(\log N) \polylog(n))$, and
$\delta$ is the tolerated distortion level.  The result, for constant $\delta$, is believed to be suboptimal within the $\polylog(n)$ factor
in the target dimension $k$.  The running time of performing the transformation on a vector is
dominated by the $O(n\log n)$ of the Fast Fourier Transform, and is believed to be optimal.  The possibility of
obtaining such a running time for fixed distortion was left as an open problem in Ailon and Chazelle and Ailon
and Liberty's work, and here we resolve it up to a factor of $\polylog(n)$.  The dependence on the constant $\delta$ is also believed
to be suboptimal, and the ``correct'' dependence shoould be $\delta^{-2}$.  The question of improving this dependence
is left as an open problem.

The use of a combination of random sign matrices and various forms of subsampled Fourier matrices was also used in the
work of Ailon and Chazelle \cite{AilonCh06} and later Ailon and Liberty \cite{AilonLi08}, as well as that of Matousek \cite{Matousek06}.
Here we obtain improved analysis using recent work by Rudelson and Vershynin for sparse reconstruction \cite{rudelsonVershynin}.

\subsection{Restricted Isometry}

An underlying idea  common to both random projections and sparse reconstruction
is the preservation of metric information under a dimension reducing transformation.
In sparse reconstruction theory, this property is known as \emph{restricted isometry}
 \cite{CandesRT06}\cite{Donoho06}.
A matrix $\Phi$ is a restricted isometry with sparseness paramater $r$ if for some $\delta>0$,
\begin{equation}
\forall \;\; r\mbox{-sparse} \; y\in\R^n \;\;\;(1-\delta)\|y\|_2^2 \leq \|\Phi y\|_2^2 \leq
(1+\delta)\|y\|_2^2\ .
\end{equation} 

By $r$-sparse $y$ we mean vectors in $\R^n$ with all but at most $r$ coordinates zero.
It was shown in  \cite{CandesRT06} that
the restricted isometry property  is sufficient for the
purpose of perfect reconstruction of sparse vectors, \emph{compressed sensing} being one of the
prominent applications. 

In \cite{Rudelson06sparsereconstruction}, Rudelson and Vershynin construct a distribution over
$k\times n$ matrices $\Phi$ such that,
with high probability, $\Phi$ has the restricted isometry property with sparseness parameter $r$
and arbitrarily small $\delta>0$.\footnote{Their analysis is done over the complex
field, but we restrict the discussion to the reals here.}
In their analysis, $k =
O(\delta^{-2}r\log(n)\cdot \log^2(r)\log(r\log n))$ and $\Phi$ can be applied
(to a given vector $x$) in running time $O(n\log n)$.  Assuming $r$ polynomial in $n$, this takes the simpler
form of  $k=O(\delta^{-2}r\log^4 n)$.\footnote{In their work, the dependence of $k$ on $\delta$
is not analyzed because $\delta$ is assumed to be fixed (for sparse signal reconstruction purposes, this dependence is not
important).  It is not hard to derive the quadratic dependence of $k$ in $\delta^{-1}$ from their work.} In fact, $\Phi$
is (up to a constant) nothing other than a random choice of $k$ rows from the (unnormalized)
 Hadamard matrix, defined as $\Psi_{\omega, t} = (-1)^{\langle \omega, t \rangle}$, where $\langle \cdot, \cdot \rangle$
is the dot product over the binary field, $n$ is assumed to be a power of $2$
and $\omega, t$ are thought of as $\log n$ dimensional vectors over the binary
field in an obvious way.\footnote{Rudelson and Vershynin use the complex Discrete Fourier Transform matrix, but
their analysis does not change when using the Hadamard matrix.}
  As a corollary of the result, one obtains a universal
matrix for reconstructing sparse signals, which can be applied to a vector in time $O(n\log n)$.
The conjecture is that
the same distribution with $k=O(\delta^{-2}r\log n)$ should work as well, but
this is a major open question  beyond the scope of this work.  For an excellent survey explaining how restricted
isometry can be used for sparse reconstruction, and why designing such matrices with good computational properties
is important we refer the readers to  \cite{BrucksteinDE09} and to references therein.

Independently, Ailon and Chazelle \cite{AilonCh06} and Ailon and Liberty
\cite{AilonLi08} were interested in constructing a distribution of $k\times n$
matrices $\Phi$ such that for any set $Y\subseteq \R^n$ of
cardinality $N$, one gets %
\begin{equation}\label{JL}
\forall \;\; y\in Y \;\;(1-\delta)\|y\|_2^2 \leq \|\Phi y\|_2^2 \leq (1+\delta)\|y\|_2^2,
\end{equation}
with constant probability.  Additionally, the number of steps
required for applying $\Phi$ on any given $x$ is $O(n\log n )$. In
their result $k$ was taken as $O(\delta^{-2}\log N)$, which is also
essentially the best possible \cite{Alon03}. Unfortunately, both
results break down when $k = \Omega(n^{1/2})$.\footnote{Ailon and Chazelle \cite{AilonCh06} and Ailon and Liberty \cite{AilonLi08}
used $d$ to denote the data dimension, $n$ its cardinality and $\eps$ the sought distortion bound.
Here we follow Rudelson and Vershynin's convention using $n$ to denote the dimension and $\delta$ the distortion bound.  We
now use $N$ to denote the data cardinality.}  Assuming the tolerance parameter $\delta$ fixed, this limitation can
be rephrased as follows: The techniques fail when the number of vectors $N$  is in $\exp\{\Omega(n^{1/2})\}$.

In both Ailon and Chazelle \cite{AilonCh06} and Ailon and Liberty's \cite{AilonLi08} results, as
well as in previous work \cite{JL84}\cite{FranklM87}\cite{DasGuptaGupta99}\cite{Matousek06}\cite{Achlioptas03}
the bounds (\ref{JL}) are obtained by proving strong tail bounds on the distribution of the
estimator $\|\Phi y\|_2$, and then applying a simple union bound
on the finite collection $Y$.  It is worth a moment's thought to realize that Ailon and Chazelle's result
as well as that of Ailon and Liberty can be used for  restricted isometry as well.
Indeed, a simple epsilon-net argument for the set of $r$-sparse vectors can turn that set
into a finite set of $\exp\{O(r\log n)\}$ vectors, on which a union bound can be applied.  However, the current limitation of
random projections mentioned above
will limit $r$  to be in $n^{O(1/2-\mu)}$ (for arbitrarily small $\mu$).
Interestingly, Rudelson and Vershynin's result does not break down for $r$ polynomial in $n$.
A careful inspection of their techniques reveals that instead of union bounding on a finite set
of strongly concentrated random variables, they
use a result due to Dudley to bound extreme values of Gaussian processes.
Can this idea be used
to improve \cite{AilonCh06} and \cite{AilonLi08}?  Intuitively there is no reason why a result which is designed for
preserving the metric of sparse vectors should help with preserving the metric of any finite set of vectors.  It turns
out, luckily, that such a reduction can be done, though not in an immediate way.  A suitable generalization of Rudelson and Vershynin's result
(Section~\ref{rvsec}), combined  with Ailon and Chazelle \cite{AilonCh06} and Ailon and Liberty's \cite{AilonLi08} method of random sign matrix
preconditioning achieves this in Section~\ref{mainresult}.


\subsection{Notation}\label{notation}
In what follows, we fix $N$ to denote the cardinality of a set $Y$ of vectors in $\R^n$,
where $n$ is fixed.   We also fix a distortion parameter $\delta \in (0, 1/2]$, and
 define $k$ to be an integer in $\Theta(\delta^{-4}(\log N)(\log^4 n))$.

Now let $\Phi$ be a random $k\times n$ matrix obtained by picking $k$ random rows
from the unnormalized $n\times n$ Hadamard matrix (the Euclidean norm of each column of $\Phi$ is $\sqrt k$).
Let $\Omega$ denote the probability space for the choice of $\Phi$.

Let $b$ denote a uniformly chosen vector in $\{-1,1\}^n$, and let  $\Gamma$ denote the probability space on the choice of $b$.
For a vector $y \in \R^n$, we denote by $D_y$ the diagonal $n\times n$ matrix with the coordinates of $y$
on the diagonal. For a real matrix, $\| \cdot \|$ denotes its spectral norm and $(\cdot)^t$ its transpose.
For a set $T \subseteq \{1,\dots n\}$, we let $\id_T$ denote the diagonal matrix with $\id_T(i,i)=1$ if $i\in T$, and $0$ otherwise.
For a vector $y \in \R^n$, let $\supp(y)$ denote the support of $y$, namely, its set of nonzero coordinates.
For a number $p \geq 1 $, let $B_p \subseteq \R^n$ denote the set of vectors $y\in \R^n$
with $\|y\|_p\leq 1$ and  $\alpha B_p$ as the set of vectors $y\in \R^n$ for which $\|y\|_p\leq \alpha$.

%
%
%
%
%

\section{ Restricted isometry result generalization}\label{rvsec}
We follow the main path of Rudelson et al. in \cite{Rudelson06sparsereconstruction}
to prove a more general formulation of their main theorem which is more suitable for us here.
\begin{thm}\label{rvl1}[Derived from Rudelson and Vershynin\cite{Rudelson06sparsereconstruction}]
Let $\alpha>0$ be any real number.  Define $E_\alpha$ as
\begin{equation}\label{gkgk}
E_\alpha=E_{\Omega}\left [ \sup_{y\in B_2\cap\alpha B_\infty}{\left \|  D_y^2 - \frac{1}{k}D_y\Phi^{t}\Phi D_y\right \|}\right ]\ .
\end{equation}
\noindent Then for some global $C_1>0$,
\begin{equation}\label{qweqwe} E_\alpha \le \frac{C_1 \log^{3/2}(n)\log^{1/2}(k)}{\sqrt k}(E_\alpha+\alpha^2)^{1/2}\ .
\end{equation}
In particular, if $\frac{(\log^{3/2}n)(\log^{1/2}k)}{\sqrt k}  = O(\alpha)$, then
\begin{equation}\label{tyutyu} E_\alpha = O\left( \frac{\alpha (\log^{3/2}n)(\log^{1/2}k)}{\sqrt k}\right ) \ .
\end{equation}
\end{thm}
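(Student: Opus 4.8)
The plan is to follow Rudelson and Vershynin's template, adapting their sparse-vector argument to the $\ell_\infty$-constrained family $B_2\cap\alpha B_\infty$. First I would rewrite the centered matrix as an average of i.i.d.\ mean-zero rank-one terms. Writing $\phi_1,\dots,\phi_k\in\{-1,1\}^n$ for the (random) rows of $\Phi$,
\[
D_y^2-\frac1k D_y\Phi^t\Phi D_y=\frac1k\sum_{j=1}^k\left(D_y^2-(D_y\phi_j)(D_y\phi_j)^t\right).
\]
Orthogonality of the Hadamard rows gives $E_\Omega[\tfrac1k\Phi^t\Phi]=\id$, hence $E_\Omega[(D_y\phi_j)(D_y\phi_j)^t]=D_y^2$, so each summand has mean zero. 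Two elementary norm bounds drive the whole estimate: since the Hadamard entries are $\pm1$ and $y\in B_2$, $\|D_y\phi_j\|_2=\|y\|_2\le 1$; and since $y\in\alpha B_\infty$, $\|D_y^2\|=\|y\|_\infty^2\le\alpha^2$.

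Next, standard symmetrization introduces i.i.d.\ Rademacher signs $\eps_j$ and yields
\[
E_\alpha\le 2\,E_\Omega E_\eps\sup_{y\in B_2\cap\alpha B_\infty}\left\|\frac1k\sum_{j=1}^k\eps_j(D_y\phi_j)(D_y\phi_j)^t\right\|.
\]
I would then fix $\Phi$ and bound the inner Rademacher supremum. Passing from Rademacher to Gaussian averages (the comparison principle) and writing the spectral norm of a symmetric matrix as a supremum over $S^{n-1}$, the quantity becomes $E_g\sup_{(y,u)}|Z_{y,u}|$ for the Gaussian process $Z_{y,u}=\frac1k\sum_j g_j\langle D_y\phi_j,u\rangle^2$ indexed by $(y,u)\in(B_2\cap\alpha B_\infty)\times S^{n-1}$, which is exactly the object Dudley's theorem controls.

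The crux — and where essentially all the work and all the obstacles lie — is the chaining estimate for this process. The canonical metric is $d((y,u),(y',u'))^2=\frac1{k^2}\sum_j(\langle D_y\phi_j,u\rangle^2-\langle D_{y'}\phi_j,u'\rangle^2)^2$. Using $a^2-b^2=(a-b)(a+b)$ together with $\max_j|\langle D_y\phi_j,u\rangle|$ and the identity $\sum_j\langle D_y\phi_j,u\rangle^2=u^tD_y\Phi^t\Phi D_y u$, I would bound $d$ by a weighted combination of Euclidean metrics on $u$ and on $y$ whose radius is proportional to $\frac1{\sqrt k}\|\tfrac1k D_y\Phi^t\Phi D_y\|^{1/2}$. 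The delicate point is the entropy of the index set: a naive net of the $n$-dimensional ball would cost a prohibitive $\sqrt n$, so one must exploit the constraint $\|y\|_\infty\le\alpha$ (which controls the relevant max-quantities and increments, playing the role that $r$-sparsity plays for Rudelson and Vershynin) to keep the integral polylogarithmic, handling the sphere direction by a trace-duality / Schatten-$p$ argument with $p\asymp\log k$. Carried through, the Dudley integral contributes a $\log^{1/2}k$ factor of Rudelson type together with an additional $\log^{3/2}n$ from chaining over the $y$-direction, giving the conditional estimate
\[
E_\eps\sup_{y}\left\|\frac1k\sum_j\eps_j(D_y\phi_j)(D_y\phi_j)^t\right\|\le\frac{C\log^{3/2}(n)\log^{1/2}(k)}{\sqrt k}\,\sup_{y}\left\|\frac1k D_y\Phi^t\Phi D_y\right\|^{1/2}.
\]

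Finally the inequality closes on itself. Bounding $\|\tfrac1k D_y\Phi^t\Phi D_y\|\le\|D_y^2\|+\|D_y^2-\tfrac1k D_y\Phi^t\Phi D_y\|\le\alpha^2+\sup_{y'}\|D_{y'}^2-\tfrac1k D_{y'}\Phi^t\Phi D_{y'}\|$, taking the supremum over $y$, the square root, and then $E_\Omega$, concavity of $t\mapsto\sqrt t$ (Jensen) turns $E_\Omega\left(\alpha^2+\sup_y\|D_y^2-\tfrac1k D_y\Phi^t\Phi D_y\|\right)^{1/2}$ into $(\alpha^2+E_\alpha)^{1/2}$, which together with the previous two displays is precisely \eqref{qweqwe} with $C_1=2C$. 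For the ``in particular'' statement, write $\beta=C_1\log^{3/2}(n)\log^{1/2}(k)/\sqrt k$; squaring $E_\alpha\le\beta(E_\alpha+\alpha^2)^{1/2}$ gives $E_\alpha^2-\beta^2E_\alpha-\beta^2\alpha^2\le 0$, and solving this quadratic while invoking the hypothesis $\beta=O(\alpha)$ (so that $\beta^4=O(\alpha^2\beta^2)$) yields $E_\alpha=O(\alpha\beta)$, i.e.\ \eqref{tyutyu}.
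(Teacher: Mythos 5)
Your proposal is correct and follows essentially the same route as the paper: symmetrization to a Rademacher sum, Gaussian comparison, Dudley's entropy integral over the set $\bigcup_y D_y B_2$ with the $\ell_\infty$ bound replacing sparsity in the two entropy regimes (the containment in $B_1$ via Cauchy--Schwarz for large $u$, a volumetric bound for small $u$), and closing the self-bounding inequality with Jensen and the triangle inequality, plus the same quadratic argument for the ``in particular'' clause.
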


The proof we present is an adaptation of the proof of Theorem~3.4 in \cite{Rudelson06sparsereconstruction} to a more general setting.
In fact, the latter theorem \cite{Rudelson06sparsereconstruction} can be obtained as an easy consequence of theorem~\ref{rvl1}
by replacing $\sup_{y\in B_2\cap \alpha B_\infty}$ in (\ref{gkgk}) by $\sup_{y\in \frac 1 {\sqrt{r}}Y_r}$ where $Y_r\subseteq \R^n$ is defined as
the set of  vectors with at most $r$  coordinates equalling $1$ and the remaining coordinates zero.  Indeed,
$\frac 1 {\sqrt r} Y_r \subseteq B_2 \cap  r^{-1/2} B_\infty$.  We can therefore conclude that for $\alpha = \frac 1 {\sqrt r}$, by definition,
\begin{equation*} E_{\Omega}\left [ \sup_{y\in  \frac 1{\sqrt r} Y_r}{\left \|  D_y^2 - \frac{1}{k}D_y\Phi^{t}\Phi D_y\right \|}\right ] \leq E_\alpha\ .
\end{equation*}

If we also assume that $k = \Theta(r\log^4 n)$, then  (\ref{tyutyu}) will hold, from which we conclude that
\begin{equation}\label{palpal} E_{\Omega}\left [ \sup_{y\in  \frac 1{\sqrt r}Y_r}{\left \|  D_y^2 - \frac{1}{k}D_y\Phi^{t}\Phi D_y\right \|}\right ] \leq O\left(\frac {(\log^{3/2}n)(\log^{1/2}k)}{\sqrt {rk}}\right) .
\end{equation}
Now we notice that $D_y = \frac 1 {\sqrt r} \id_{\supp{y}}$, where for a set of indexes $T$ the diagonal matrix $\id_T$ (as defined in \cite{Rudelson06sparsereconstruction})
has $1$ in diagonal position $i$ if and only if $i\in T$.  Using this observation and multiplying (\ref{palpal}) by $r$ we conclude that
\begin{equation*} E_{\Omega}\left [ \sup_{|T|\leq r}{\left \|  \id_{T} - \frac{1}{k}\id_T\Phi^{t}\Phi \id_T\right \|}\right ] \leq O\left(\frac {\sqrt r (\log^{3/2}n)(\log^{1/2}k)}{\sqrt {k}}\right)\ ,
\end{equation*}
which is exactly the main result of Rudelson and Vershynin in \cite{Rudelson06sparsereconstruction}   for
restricted isometry.

The proof of Theorem~\ref{rvl1} below points out the necessary changes to the proof of Theorem~3.4 in \cite{Rudelson06sparsereconstruction}.
The difference between the theorems is that in our case, the supremum in the definition of $E_\alpha$ is taken not only over the set
of sparse vectors, but over a richer set.  It turns out however that \cite{Rudelson06sparsereconstruction} uses sparsity in a very
limited way:  In fact, the dominating effect of sparsity there is obtained using the fact that the $L_1$ norm of a sparse vector is
small, compared to its $L_2$ norm.   These  arguments appear at the very end of their proof.  For the sake of contributing to the
 self containment of the paper we walk
 through the main milestones of the proof of Theorem~3.4 in \cite{Rudelson06sparsereconstruction},
and point out the changes necessary for our purposes.
The  reader is nevertheless encouraged to refer to the enlightening exposition in
\cite{Rudelson06sparsereconstruction} first.


\begin{proof}
Clearly $E[\frac 1 k D_y\Phi^t\Phi D_y] = D_y^2$.
We define new independent random i.i.d. variables $\epsilon_1,\dots, \epsilon_n$ obtaining
each the values $+1,-1$ with equal probability.  Let $\Pi$ denote the probability space for $\epsilon_1,\dots, \epsilon_n$.
It
suffices to prove  (using  a symmetrization argument, see Lemma~6.3 in \cite{LT91}) that
\begin{equation}\label{gggggg}
E_{\Omega \times \Pi}\left[\sup_{y\in B_2\cap\alpha B_\infty} \left \|\frac 1 k  \sum_{i=1}^k \epsilon_i (x_i D_y)^t(x_i D_y)\right \|
\right] \leq \frac{2C_1(\log^{3/2} n)(\log^{1/2}k)}{\sqrt k}(E_\alpha+\alpha^2)^{1/2} ,\end{equation}
 where $x_i$ is the (random)
$i$'th row of $\Phi_k$.
To that end, as claimed in \cite{Rudelson06sparsereconstruction} (Lemma 3.5),
 if we can show that for any fixed choice of $\Phi$,
\begin{equation}\label{ptptptpt}
E_{\Pi} \left [ \sup_{y\in B_2\cap\alpha B\infty}\left \| \sum_{i=1}^k
\epsilon_i (x_iD_y)^t(x_iD_y)\right \|\right ] \leq k_1
\sup_{y\in B_2 \cap \alpha B_\infty}\left \| \sum_{i=1}^k (x_i D_y)^t(x_iD_y)\right \|^{1/2}
\end{equation}

\noindent for some number $k_1$, then by taking $E_\Omega$ on both sides and using Jensen's inequality (to swap  $(\cdot)^{1/2}$ on the RHS with $E_\Omega$)  and the
triangle inequality, the conclusion
would be that
\begin{equation}\label{tytyty} E_\alpha \leq \frac {2k_1}{\sqrt k}\left (E_\alpha+\|D_y^2\|\right )^{1/2}\ .
\end{equation}
Since $\|D_y^2\| = \|y\|^2_\infty \leq \alpha$, we would get the stated result.  It thus suffices to prove (\ref{ptptptpt})
with $k_1 = O((\log^{3/2} n)(\log^{1/2}k))$.
To do so, \cite{Rudelson06sparsereconstruction} continue by replacing the $k$ binary random variables $\epsilon_1,\dots, \epsilon_k$ in (\ref{ptptptpt})
with $k$ Gaussian random variables $g_1,\dots, g_k$ using a comparison principle (inequality (4.8) in \cite{LT91}), reducing the problem
to that of bounding the expected extreme value of a Gaussian process.  Using Dudley's inequality (Theorem 11.17 in \cite{LT91}), as Rudelson
and Vershynin do, one concludes that (\ref{ptptptpt}) will hold with $k_1$ taken as:
\def\N{\mathcal N}
\begin{equation}\label{intgrl} \int_{0}^\infty \log^{1/2} \N(B, \|\cdot \|_X, u) du\ ,\end{equation}
where:
\begin{itemize}
\item For a norm $\|\cdot\|_\star$, a set $S$ and number $u$, $\N(S, \|\cdot\|_\star, u)$ denotes the minimal number of balls of radius $u$ in norm $\|\cdot\|_\star$ centered in points of $S$ needed to cover the set $S$,
    \item $B$ is defined as $\cup_{y\in B_2 \cap \alpha B_\infty} B_y$, where $B_y= \{D_y z: z\in B_2\}$, and
    \item $\| x\|_X = \max_{i\leq k} |\langle x_i, x\rangle|$, where we remind the reader that $x_i$ is the $i'th$ row of $\Phi$.
\end{itemize}

 Rudelson and Vershynin derive bounds on $\N(B_{RV}, \|\cdot\|_X, u)$ for small $u$ and for large $u$ separately,
where in their case $B_{RV}$ was the set of $r$-sparse vectors of Euclidean norm $1$ (denoted by $D_2^{r,n}$ in \cite{Rudelson06sparsereconstruction}).  The sparsity of the vectors in the set $B_{RV}$ is used in both derivations, as follows:


\begin{itemize}
\item For large $u$, they use containment argument (11) in \cite{Rudelson06sparsereconstruction}, asserting
 that $B_{RV}\subseteq \sqrt r B_1$.  Note that by Cauchy Schwartz and the definition of $B$, $B\subseteq B_1$
  hence we "gain" a factor of $\sqrt r$ when deriving $k_1$.
  \item For small $u$,  inequality (13) in \cite{Rudelson06sparsereconstruction} asserts that $\N(B_{RV}, \|\cdot\|_X, u) \leq d(n,r)(1+2/u)^r$, where $d(n,r)$ is the number of ways to choose
$r$ elements from a set of $n$ elements.  Since the best sparseness we can
assume for vectors in $B$ here is trivially $n$, we  replace the expression $d(n,r)$  with $d(n,n)=1$, and $(1+2/u)^r$  with $(1+2/u)^n$.\footnote{To be exact, in \cite{Rudelson06sparsereconstruction} they use the expression $(1+2K/u)^r$ and not $(1+2/u)^r$, but the parameter $K$ in their work can be taken as $1$ for our purposes.}
\end{itemize}
Rudelson and Vershynin then derive a bound for $\int_0^\infty \N^{1/2}(B_{RV}, \|\cdot\|_X, u)du$ by balancing the two bounds at
$u = 1/\sqrt r$.  In our case we balance at $u=1/\sqrt n$.  The net result
will lead to a $k_1$ which is as
the one in the statement of Lemma 3.5 \cite{Rudelson06sparsereconstruction}, except that the
$\sqrt r$ will disappear and $\log r$ will be replaced by
$\log n$.  The conclusion is that we can take $k_1$ to be
$$ k_1 = O\left ((\log n)(\sqrt{\log n})({\log k})\right) = O\left((\log^{3/2}n)(\log k)\right)\ ,$$
as required.
\end{proof}



\section{Random Projections}\label{mainresult}

\noindent Our main result claims that the same construction used by Rudelson et al.
also gives improved bounds for random projections.
In what follows, we fix $r$ to be $\lceil \delta^{-2}\log N \rceil$ and $\alpha$ to be $1/\sqrt r$.  Additionally, we
 assume that $\Phi$ is such that
 \begin{equation}\label{assump}
 E_\alpha = O(\alpha^2)\ .
 \end{equation}
   Indeed, Theorem~\ref{rvl1} guarantees that this holds with
 probability at least $0.99$ in $\Omega$.

\begin{thm}
Let $Y\subseteq B_2$ denote a set of cardinality $N$, and let $\Phi$ satisfy (\ref{assump}).  With probability at least $0.98$ (in $\Gamma$)
we have the following uniform bound for all $y\in Y$:
$$ 1-O(\delta) \leq \left\|\frac 1{\sqrt k} \Phi D_y b\right \| \leq 1 + O(\delta)\ .$$
\end{thm}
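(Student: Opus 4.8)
The plan is to fix a ``good'' $\Phi$ --- one realizing the bound behind (\ref{assump}), which happens with probability $0.99$ over $\Omega$ --- and then prove concentration over the random signs $b$. Write $Q=\frac1k\Phi^t\Phi$; since each column of $\Phi$ has norm $\sqrt k$ we have $Q_{ii}=1$, so $R:=Q-\id$ has zero diagonal. Set $\bar M_y:=D_yRD_y$. Because $D_b y=D_y b$, a direct computation gives
\begin{equation*}
\left\|\frac{1}{\sqrt k}\Phi D_y b\right\|^2=b^tD_yQD_yb=\|y\|_2^2+b^t\bar M_y b .
\end{equation*}
I will take $\|y\|_2=1$ for every $y\in Y$ (the general case is homogeneous), so this equals $1+F_y$ with $F_y:=b^t\bar M_y b=\sum_{i\neq j}y_iy_jR_{ij}b_ib_j$. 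As $R$ has zero diagonal, $E_\Gamma[F_y]=0$, and since $\sqrt{1+s}\in 1\pm O(\delta)$ whenever $|s|=O(\delta)$, the theorem reduces to showing that $\sup_{y\in Y}|F_y|=O(\delta)$ with probability at least $0.98$ over $\Gamma$.

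Each $F_y$ is a mean-zero Rademacher chaos of order two, so I would control it by a Hanson--Wright-type tail bound and then union bound over the $N$ vectors of $Y$. Such a bound decays like $\exp\!\left(-c\min\{t^2/\|\bar M_y\|_F^2,\ t/\|\bar M_y\|\}\right)$ at level $t$, so I need both $\|\bar M_y\|_F$ and $\|\bar M_y\|$ small enough that the exponent exceeds $C\log N$ at $t=\Theta(\delta)$, uniformly in $y$. The Frobenius factor is the easy one: since $(\bar M_y)_{ij}=y_iR_{ij}y_j$ off the diagonal,
\begin{equation*}
\|\bar M_y\|_F^2=\sum_{i\neq j}y_i^2y_j^2R_{ij}^2\le\left(\max_{i\neq j}|R_{ij}|\right)^2\|y\|_2^4 ,
\end{equation*}
and a Hoeffding estimate over the at most $n^2$ entries of $R$ shows $\max_{i\neq j}|R_{ij}|=O(\sqrt{\log n/k})$ with high probability in $\Omega$. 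With $k=\Theta(\delta^{-4}(\log N)(\log^4 n))$ this makes the sub-Gaussian term $\delta^2/\|\bar M_y\|_F^2$ comfortably larger than $\log N$.

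The operator norm is where Theorem~\ref{rvl1} enters and where $y$ must be split. Note $\bar M_y=-\bigl(D_y^2-\frac1k D_y\Phi^t\Phi D_y\bigr)$, so $\|\bar M_y\|$ is exactly the quantity estimated there. Decompose $y=y_L+y_H$, where $y_L=\id_L y$ collects the \emph{light} coordinates $|y_i|\le\alpha$ and $y_H=\id_H y$ the \emph{heavy} ones; since $\|y\|_2=1$ there are at most $\alpha^{-2}=r$ heavy coordinates, so $y_H$ is $r$-sparse. For the light part $y_L\in B_2\cap\alpha B_\infty$, the assumption (\ref{assump}) gives $\|\bar M_{y_L}\|=O(\alpha^2)=O(\delta^2/\log N)$; hence the sub-exponential term $\delta/\|\bar M_{y_L}\|=\Omega(\log N/\delta)$ also exceeds $C\log N$, and the light chaos $b^tD_{y_L}RD_{y_L}b$ is $O(\delta)$ for all $y$ simultaneously after the union bound. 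This is precisely where the flatness constraint $\alpha B_\infty$ of Theorem~\ref{rvl1}, rather than mere sparsity, is indispensable.

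The heavy part is the main obstacle: the flat-vector bound is unavailable, and for a spiky $y_H$ the operator norm $\|\bar M_{y_H}\|$ can be as large as $\Theta(\delta)$, for which the generic sub-exponential tail is far too weak to survive a union bound over $N=\exp(\tilde O(n))$ vectors. My plan is to control the heavy contribution through the sparse restricted isometry estimate $\sup_{|T|\le r}\|\id_T-\frac1k\id_T\Phi^t\Phi\id_T\|=O(\delta)$, which (\ref{palpal}) extracts from Theorem~\ref{rvl1}, together with a dyadic decomposition of $y_H$ by coordinate magnitude. The structural point I would exploit is that an isolated large coordinate enters $F_y$ only through terms \emph{linear} in the remaining signs, hence sub-Gaussian at the harmless scale $\max_{i\neq j}|R_{ij}|$, whereas a block $H$ of $\abs{H}$ comparably-sized heavy coordinates obeys $\|D_{y_H}R_{HH}D_{y_H}\|\le\|y_H\|_\infty^2\,\|R_{HH}\|$, which, using $\|y_H\|_\infty^2\approx 1/\abs{H}$ and the sparse-isometry scaling $\|R_{HH}\|=O(\delta\sqrt{\abs{H}/r})$, falls back below $\alpha^2/\delta$. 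Balancing these estimates across scales, and handling the $H$--$L$ cross terms by Cauchy--Schwarz against the two bounds already obtained, should give $|F_y|=O(\delta)$ uniformly; combining the heavy and light estimates and taking square roots then yields the stated two-sided bound, with the small failure probabilities from $\Omega$ and from the union bound over $\Gamma$ absorbed into the constants.
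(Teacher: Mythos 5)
Your reduction to bounding the chaos $F_y=b^t D_y R D_y b$, and your treatment of the light block via (\ref{assump}) plus a Hanson--Wright/Talagrand-type tail and a union bound over $Y$, are sound and essentially parallel the paper's argument (the paper uses Talagrand's concentration for the convex function $b\mapsto\|\frac1{\sqrt k}\Phi D_{\yl}b\|$ with Lipschitz constant $\sigma_\yl=O(\alpha)$, which is the same mechanism; your Frobenius-norm estimate just adds one more high-probability event in $\Omega$ to the definition of a ``good'' $\Phi$). The gap is in the heavy part, which you correctly flag as the main obstacle but then leave as an unverified sketch. The missing observation is that the heavy--heavy block needs \emph{no} concentration in $b$ at all: since $\supp(D_{\yh}b)\subseteq\supp(\yh)$, the vector $D_{\yh}b$ is itself $r$-sparse with $\|D_{\yh}b\|_2=\|\yh\|_2$, so the sparse restricted isometry bound $\sup_{|T|\le r}\|\id_T-\frac1k\id_T\Phi^t\Phi\id_T\|=O(\delta)$ (which Section~\ref{rvsec} extracts from Theorem~\ref{rvl1}) applies \emph{pointwise in $b$} and gives $\bigl|\,b^tD_{\yh}RD_{\yh}b\,\bigr|=\bigl|\,\|\frac1{\sqrt k}\Phi D_{\yh}b\|^2-\|\yh\|^2\,\bigr|\le O(\delta)\|\yh\|^2$ deterministically for the fixed good $\Phi$. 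This makes your entire dyadic machinery unnecessary.

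Moreover, the dyadic scheme as sketched does not close. Your balancing claim that $\|D_{y_H}R_{HH}D_{y_H}\|\approx\frac1{|H|}\cdot\delta\sqrt{|H|/r}=\delta/\sqrt{|H|r}$ ``falls back below $\alpha^2/\delta=1/(r\delta)$'' requires $|H|\gtrsim\delta^4 r$, which fails for small heavy blocks that are nevertheless not single coordinates; the sub-exponential term of Hanson--Wright then does not beat $\log N$ for those blocks, and the off-diagonal heavy--heavy and heavy--light cross terms (of which there are $\Theta(\log^2 r)$ pairs) are only waved at. Even where the per-block bounds hold, summing over $\Theta(\log r)$ scales costs an extra logarithm unless handled with more care. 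The paper's route avoids all of this: heavy--heavy is deterministic as above, and the single $\yh$--$\yl$ cross term is handled by conditioning on the heavy signs $\bh$, noting the conditional function is linear in $\bl$ with Lipschitz constant $\|\frac1k(\bh)^tD_{\yh}\Phi^t\Phi D_{\yl}\|\le\|\frac1{\sqrt k}\Phi D_{\yh}\bh\|\cdot\|\frac1{\sqrt k}\Phi D_{\yl}\|=O(\|\yh\|\,\alpha)$ --- the first factor again by the sparse RIP, the second by (\ref{assump}) --- and then union bounding over $Y$. You would need to replace your heavy-part argument with something of this kind for the proof to go through.
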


We provide  some intuition for the proof.  We split our input vectors $Y$ into sums of two vectors,
one of which is $r$-sparse and the other with $\ell_\infty$ norm bounded by $1/\sqrt{r}$.
We use Rudelson et al.'s original result for the sparse part and
our generalization of it (Theorem~\ref{rvl1}), together with Talagrand's measure concentration theorem
 for the $\ell_\infty$-bounded part.

\begin{proof}
Let $r$ and $\alpha$ be defined as in Section~\ref{rvsec}.
For each $y\in Y$ we write $y=\yh+\yl$, where $\yh$ is the restriction of $y$ to its $r$ largest (in absolute value) coordinates and
$\yl$ is the restriction to its remaining coordinates.  Note that $\|y\|^2 = \|\yh\|^2 + \|\yl\|^2$ and that $\yh$ is $r$-sparse and that  $\|\yl\|_\infty \leq \alpha$.
$$\left \|\frac{1}{\sqrt{k}}\Phi D_{y} b\right\|^2  = \left \|\frac{1}{\sqrt{k}}\Phi D_{\yh} b\right\|^2  + \left \|\frac{1}{\sqrt{k}}\Phi D_{\yl} b\right\|^2 +
\frac{2}{k}b^{t}D_{\yh}\Phi^{t}\Phi D_{\yl} b. $$
For the first term we have $ \left \|\frac{1}{\sqrt{k}}\Phi D_{\yh} b\right\|^2 = \|\yh\|^2 + O(\delta)$ from Theorem~\ref{rvl1} and
the fact that $\yh$ is $r$-sparse.

In what follows we will use the bound on $\|\yl\|_\infty$
to show that  with high probability, for all $y\in Y$, $\left \|\frac{1}{\sqrt{k}}\Phi D_{\yl} b\right\|^2 = \|\yl\|^2 + O(\delta)$.
A similar argument will bound the cross product $\frac{2}{k}b^{t}D_{\yh}\Phi^{t}\Phi D_{\yl} b$.  
Combining the three gives the desired result that $\left \|\frac{1}{\sqrt{k}}\Phi D_{y} b\right\|^2 = \|y\|^{2} + O(\delta)$.

We start by analyzing  the measure concentration properties of $ \left \|\frac{1}{\sqrt{k}}\Phi D_{\yl} b\right\|^2$.  
Let $X_{\yl}$ be the Rademacher random variable defined by
$$ X_{\yl} = \left \|\frac 1 {\sqrt k}\Phi D_{\yl} b\right\|\ .$$
Let $\mu_\yl$ denote a median of $X_\yl$.
By Talagrand \cite{LT91}, we have that for all $t>0$,
\begin{eqnarray}
 \Pr[X_\yl > \mu_\yl + t] &\leq& \exp\{-C_2 t^2/\sigma_\yl^2\}  \label{hththt1}\\
\Pr[X_\yl < \mu_\yl - t] &\leq& \exp\{-C_2 t^2 / \sigma_\yl^2\}\label{hththt2}
\end{eqnarray}
for some global $C_2$, where $\sigma_\yl = \left \|\frac 1 {\sqrt k}\Phi D_{\yl}\right \|$.
By the triangle inequality and Equation~(\ref{assump}) we have
$\sigma_\yl^2 = \| \frac{1}{k}D_\yl \Phi^{t}\Phi D_\yl - D_\yl^2 + D_\yl^2 \|  \leq  \alpha^2  + \|D_\yl^2 \|$.
Clearly $\|D_\yl\| = \| \yl \|_\infty \leq \alpha$.  Hence, $\sigma_\yl^2 = O(\alpha^2)$.
From the fact that $E[X_\yl^2] = \|\yl\|^2$ and using Appendix~\ref{appendix1} and (\ref{hththt1})-(\ref{hththt2})
We conclude that $\|\yl\| - O(\sigma_{\yl}) \leq \mu_\yl \leq \|\yl\| + O(\sigma_\yl)$.
Hence, again using (\ref{hththt1})-(\ref{hththt2}) and union bounding over the $N$ vectors in $Y$,
we conclude that with probability $0.99$, uniformly for all $y \in Y$:
$$ \|\yl\| - O(\delta) \leq \frac 1 {\sqrt k}\left \| \Phi D_{\yl} b\right \| \leq \|\yl\| + O(\delta)\ .$$


We now bound the cross term $Z =  \frac 1 k b^t D_{\yh} \Phi^t\Phi D_{\yl} b$  ($y$ is now held fixed).
By disjointness of $\supp(\yh)$ and $\supp(\yh)$, $E[Z] =0$. Decompose $b$ into $\bl + \bh$, where $\supp(\bl) = \supp(\yl)$
and $\supp(\bh) = \supp(\yh)$.  For any fixed $\bh$, the function $Z$ is linear (and hence convex) in $\bl$.  Also for all possible
values $\bh'$ of $\bh$, $E[Z| \bh=\bh'] = 0$.  Hence, again by Talagrand,
\begin{eqnarray}\label{hththt}
 \Pr[Z > \mu_{\bh'} + t] &\leq& \exp\{-C_2 t^2/\sigma_{\bh'}^2\} \\
\Pr[Z < \mu_{\bh'} - t] &\leq& \exp\{-C_2 t^2 / \sigma_{\bh'}^2\}
\end{eqnarray}
where $\mu_\bh'$ is a median of $(Z|\bh = \bh')$, and $\sigma_{\bh'}= \|\frac 1 k(\bh')^t D_{\yh}\Phi^t \Phi D_{\yl}\|$.  Clearly,
\begin{eqnarray*}
\sigma_{\bh'} &\leq& \left \|\frac 1 {\sqrt k}(\bh')^t D_{\yh}\Phi^t \right \| \cdot \left\|\frac 1 {\sqrt k}\Phi D_{\yl}\right\| \\
&=& O(\|\yh\|\sigma_\yl) = O(\sigma_\yl) = O(\alpha)\ .
\end{eqnarray*}
Again using Appendix~\ref{appendix1} and $E[Z| \bh=\bh'] = 0$ gives
that $|\mu_\bh'| = O(\alpha)$, and again we conclude using a union bound that with probability at least
$0.99$, uniformly for all $y\in Y$,
$\left |\frac 1 k b^t D_{\yh} \Phi^t\Phi D_{\yl} b\right | = O(\delta)$.

Tying it all together, we conclude that with probability at least $0.98$, uniformly for all $y\in Y$,
\begin{eqnarray*}
\frac 1 k\| \Phi D_yb\|^2 &=& \frac 1 k \|\Phi D_{\yl}b\|^2 + \frac 1 k \|\Phi D_{\yh}b\|^2 + 2b^tD_{y^H}\Phi^t\Phi D_{\yl} b \\
&=& \| y \|^2 + O(\delta)\ , \\
\end{eqnarray*}
as required.
\end{proof}

\section{Conclusions}
The obvious problems left open are those  of (1) improving the dependence of $k$ in $\delta$ (from $\delta^{-4}$ to $\delta^{-2}$) and (2) removing the dependence of $k$ in $\polylog(n)$.  Other directions of research include not only reducing the computational efficiency of random dimension reduction, but also the amount of randomness needed for the construction.

\section*{Acknowledgements}
We thank Emmanuel Candes for helpful discussions.

\bibliographystyle{unsrt}
\bibliography{fjlt}






\appendix
\section{}
\label{appendix1}
\begin{fact}
For any real valued random variable $Z$ such that for all $t>0$
\begin{eqnarray}
\label{lalala} \Pr[Z > \mu + t] &\leq& \exp\{ - c t^2/\sigma^2\} \\
\nonumber \Pr[Z < \mu - t] &\leq& \exp\{ - c t^2 / \sigma^2\}
\end{eqnarray}
we have that $\sqrt{E(Z^2)} - O(\sigma) \le \mu \le \sqrt{E(Z^2)} + O(\sigma)$.
\end{fact}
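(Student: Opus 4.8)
The plan is to reduce everything to the centered variable $W = Z - \mu$ and then compare second moments. First I would observe that the two tail hypotheses combine to give $\Pr[|W| > t] \leq 2\exp\{-c t^2/\sigma^2\}$ for all $t>0$, so that $W$ is sub-Gaussian with scale $\sigma$. From the layer-cake identities $E[|W|] = \int_0^\infty \Pr[|W| > t]\,dt$ and $E[W^2] = \int_0^\infty \Pr[|W| > \sqrt{s}]\,ds$, the Gaussian tail integrates to yield the two moment estimates $|E[W]| \leq E[|W|] = O(\sigma)$ and $E[W^2] = O(\sigma^2)$. These are the only probabilistic inputs needed; everything afterward is elementary algebra.

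Next I would expand $E[Z^2] = E[(\mu + W)^2] = \mu^2 + 2\mu\,E[W] + E[W^2]$, so that $|E[Z^2] - \mu^2| \leq 2|\mu|\,|E[W]| + E[W^2] = O(|\mu|\sigma) + O(\sigma^2)$. The task is then to convert this control on $|E[Z^2] - \mu^2|$ into control on $|\sqrt{E[Z^2]} - |\mu||$, which I would carry out through the factorization
\[ \sqrt{E[Z^2]} - |\mu| = \frac{E[Z^2] - \mu^2}{\sqrt{E[Z^2]} + |\mu|}\ . \]

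The main subtlety, and the step I expect to require the most care, is the cross term $2\mu\,E[W]$: it is only $O(|\mu|\sigma)$ and hence a priori not bounded in terms of $\sigma$ alone when $|\mu|$ is large. The factorization above is precisely what absorbs this difficulty, and I would close the argument with a short case split. When $|\mu| \geq \sigma$, the denominator is at least $|\mu|$, so dividing the numerator $O(|\mu|\sigma + \sigma^2)$ by $|\mu|$ gives $O(\sigma)$; when $|\mu| < \sigma$, both $|\mu|$ and (by the second-moment bound) $\sqrt{E[Z^2]}$ are $O(\sigma)$, so their difference is trivially $O(\sigma)$. In either case $|\sqrt{E[Z^2]} - |\mu|| = O(\sigma)$, which yields the two stated inequalities. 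This matches the applications in Section~\ref{mainresult}, where $Z$ is either a norm (so $\mu \geq 0$ and $|\mu| = \mu$) or centered (so that applying the bound to both $Z$ and $-Z$ controls $|\mu|$ directly).
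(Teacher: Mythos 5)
Your proposal is correct and follows essentially the same route as the paper's proof: center the variable, integrate the sub-Gaussian tails to get $E[|Z-\mu|]=O(\sigma)$ and $E[(Z-\mu)^2]=O(\sigma^2)$ (the paper sums over integer shells where you use the layer-cake integral), and then expand $E[Z^2]$ around $\mu^2$. Your final step is in fact slightly more careful than the paper's, which jumps from $E[Z^2]-2\mu E[Z]+\mu^2=O(\sigma^2)$ to $E[Z^2]=(\mu\pm O(\sigma))^2$ without the factorization and case split on $|\mu|$ versus $\sigma$, and your remark about the sign of $\mu$ correctly identifies the only point where the statement as written needs the applications' guarantee that $\mu\ge 0$ or that $Z$ is centered.
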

\begin{proof}
Define the variable $Z' = (Z - \mu)/\sigma$.
\begin{eqnarray*}
E[Z'] \le E[|Z'|] &\le& \sum_{i=1}^{\infty} i \Pr(i-1\le |Z'| \le i) \\
&\le& \sum_{i=1}^{\infty} i \Pr(|Z'| \ge i - 1) \le 2\sum_{i=1}^{\infty} i \exp\{-c (i-1)^2\} = O(1)
\end{eqnarray*}
Clearly, $E[Z'] = O(1)$ gives $E(Z) = \mu + O(\sigma)$. In the same way we get $E[Z'^2] = O(1)$.
Thus, $E[Z^2] - 2\mu E[Z] + \mu^2 = O(\sigma^2)$ and $E[Z^2] = (\mu \pm O(\sigma))^2$

\end{proof}

\end{document}